\newcommand{\N}{\mathbb{N}} 
\newcommand{\R}{\mathbb{R}} 
\newcommand{\1}{\mathds{1}}
\theoremstyle{plain}
\newtheorem{thrm}{Theorem}
\newtheorem{prop}[thrm]{Proposition}
\newtheorem{cor}[thrm]{Corollary}
\theoremstyle{definition}
\newtheorem{remark}[thrm]{Remark}
\begin{document}			

\title[Renewal approach for the energy-momentum relation]{Renewal approach for the energy-momentum relation of the Fröhlich polaron}
			
\author{Steffen Polzer}
\address{Steffen Polzer \hfill\newline
	\indent Section de mathématiques, Université de Genève
}
\email{steffen.polzer@unige.ch}
			
\maketitle

\begin{quote}
{\small
	{\bf Abstract.}
	We study the qualitative behaviour of the energy-momentum relation of the Fröhlich polaron at fixed coupling strength. Among other properties, we show that it is non-decreasing and that the correction to the quasi-particle energy is negative. We give a proof that the effective mass lies in $(1, \infty)$ that does not need the validity of a central limit theorem for the path measure. 
} 	
\end{quote}
			
\section{Introduction and results}
The polaron models the interaction of an electron with a polar crystal. The Fröhlich Hamiltonian describing the interaction of the electron with the lattice vibrations has a fiber decomposition in terms of the Hamiltonians
\begin{equation*}
H(P) = \frac12 (P - P_f)^2 + \mathbf N + \frac{\sqrt{\alpha}}{\sqrt{2} \pi} \int_{\mathbb R^3} 
\frac{1}{|k|}(a_k + a^\ast_k) \, \mathrm dk
\end{equation*}
at fixed total momentum $P\in \mathbb R^3$ that act on the bosonic Fock space over $L^2(\R^3)$. Here $a_k^\ast$ and $a_k$ are the creation and annihilation operators satisfying the canonical commutation relations $[a^\ast_k, a_{k'}] = \delta(k-k')$, $\mathbf N \equiv  \int_{\mathbb R^3} a^\ast_k a_k\, \mathrm d k$ is the number operator, $P_f \equiv \int_{\R^3} k a_k^* a_k \mathrm dk$ is the momentum operator of the field and $\alpha>0$ is the coupling constant. Of particular interest has been the energy momentum relation
\begin{equation*}
	E(P) \coloneqq \inf \operatorname{spec}(H(P)).
\end{equation*}
For small $|P|$ the system is believed to behave like a free particle with an increased ``effective mass''. $E$ is known to have a strict local minimum at $P=0$ and to be smooth in a neighbourhood of the origin. The effective mass is defined as the inverse of the curvature at the origin such that
\begin{equation}
	\label{Equation: Asympotics around 0}
	E(P) - E(0) = \frac{1}{2m_{\text{eff}}} |P|^2 + o(|P|^2)
\end{equation}
in the limit $P \to 0$. While significant effort has been put into the study of the asymptotic behaviour of $E(0)$ and $m_{\text{eff}}$ in the strong coupling limit $\alpha \to \infty$ (see e.g \cite{DoVa83}, \cite{LiTh97}, \cite{LiSe20}, \cite{BP22}, \cite{MMS22}), we will be interested in the qualitative behaviour of $E$ at a fixed value of the coupling constant. One valuable tool for the analysis of $E(0)$ and $m_{\text{eff}}$ has been their probabilistic representation obtained via the Feynman-Kac formula. The approach taken below extends the probabilistic methods developed for the analysis of the effective mass to the whole energy momentum relation.\\
\\ 
Let
\begin{equation*}
	E_\text{ess}(P)  \coloneqq \inf  \text{ess spec}(H(P))  
\end{equation*}
be the bottom of the essential spectrum. It is known \cite{Sp88} that
\begin{equation*}
	E_\text{ess}(P) = E(0) + 1
\end{equation*}
for all $P$. From now on, we will often abuse notation and identify a radially symmetric function on $\mathbb R^3$ with a function on $[0, \infty)$. Keeping that in mind, let
\begin{equation*}
	\mathcal I_0 \coloneqq \{P\in [0, \infty): E(P) <   E(0) + 1\}
\end{equation*}
(which is known to contain a neighbourhood of the origin). For Hamiltonians with stronger regularity assumptions (e.g. the Fröhlich polaron with an ultraviolet cutoff) it is known \cite{Mo06} that the spectral gap closes in the limit, i.e. that $\lim_{P \to \infty} E(P) = E_\text{ess}(0)$. For the Fröhlich polaron in dimensions 1 and 2 it is known \cite{Sp88} that $\mathcal I_0 = [0, \infty)$ i.e. that the spectral gap does not close in a finite interval. In dimension 3, however, it has been predicted in the physics literature that $\mathcal I_0$ is bounded \cite{Fe72}. For sufficiently small coupling constants, this has been shown in \cite{Da17}.  In the framework presented below, the question whether $\mathcal I_0$ is bounded or unbounded reduces to the study of the tails of a probability distribution on $(0, \infty)^2$. There does not seem to be known much about the behaviour of $E$ in the intermediate $P$-regime. In \cite{DySp20} it was shown that $E$ is real analytic on $\mathcal I_0$ with $E(0) \leq E(P)$ for all $P$ and that the inequality is strict for $P$ outside of a compact set. In recent work it has been shown \cite{LMM22} that $E$ has indeed a strict global minimum in 0. In the present text, we will prove some previously unknown properties of $E$, namely monotonicity and concavity of $P\mapsto E(\sqrt{P})$ on $[0, \infty)$, both of which are strict on $\mathcal I_0$.  
The (strict) monotonicity additionally allows us to replicate the result of \cite{LMM22}.
\begin{thrm}
	\label{Theorem: Main result}
	The following holds.
	\begin{enumerate}[label=(\roman*)]
		\item $P \mapsto E(P)$ is non-decreasing on $[0, \infty)$ and strictly increasing on $\mathcal I_0$. In particular, $\mathcal I_0$ is an (potentially unbounded) interval.
		\item $P \mapsto E(\sqrt{P})$ is strictly concave on $\mathcal I_0$. In particular
		\begin{equation*}
			E(P) - E(0) < \frac{1}{2m_{\text{eff}}} P^2
		\end{equation*}
		for all $P>0$, i.e. the correction to the quasi-particle energy is negative and  $\big[0, \sqrt{2 m_\text{eff}}\, \big) \subset \mathcal I_0$.
		\item For $|P| \notin \operatorname{cl}(\mathcal I_0)$ we have $\lim_{\lambda \uparrow E(P)}\langle \Omega, (H(P) - \lambda)^{-1} \Omega \rangle<\infty$, where $\Omega$ is the Fock vacuum, in particular $H(P)$ does not have a ground state.
	\end{enumerate}
\end{thrm}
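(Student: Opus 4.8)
The plan is to establish a probabilistic representation of the energy-momentum relation via the Feynman-Kac formula, expressing quantities like $\langle \Omega, (H(P)-\lambda)^{-1}\Omega\rangle$ and $E(P)$ in terms of a path measure on Brownian-type trajectories, where the pair-interaction kernel $\frac{1}{2}e^{-|t-s|}/|X_t - X_s|$ encodes the coupling to the phonon field after integrating out the Gaussian field. The momentum $P$ enters this representation through a tilting by $e^{iP\cdot(X_t - X_0)}$ (a Fourier/Gaussian shift in the endpoint displacement), so that the Laplace transform $\int_0^\infty e^{\lambda t}\,\mathbb{E}[\,\cdots\,]\,\mathrm dt$ is finite precisely when $\lambda < E(P)$. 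Concretely, I expect the resolvent matrix element $\langle \Omega, (H(P)-\lambda)^{-1}\Omega\rangle$ to equal such a Laplace transform of a partition-function-like object, so that the content of (iii) becomes the assertion that this Laplace transform stays finite even as $\lambda \uparrow E(P)$, when $|P|$ lies strictly outside $\operatorname{cl}(\mathcal I_0)$.

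First I would set up the renewal structure promised by the paper's title: decompose each trajectory according to the ``irreducible'' or renewal blocks of the interaction, so that the partition function factorizes as a sum over renewal configurations and the generating function becomes a geometric series $\sum_n (\text{something})^n$, convergent exactly when the per-block weight is $<1$. The threshold $\lambda = E(P)$ should correspond to the value at which this per-block weight equals $1$, i.e. to the radius of convergence of the renewal series. The key dichotomy is then: if $|P|\in\mathcal I_0$ the ground state exists and the relevant sum diverges at $\lambda = E(P)$ (the renewal measure has infinite mass), whereas if $|P|\notin \operatorname{cl}(\mathcal I_0)$ the renewal weight is strictly subcritical up to and including the endpoint, so the geometric series converges and its value at $\lambda = E(P)$ is finite. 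I would use the strict monotonicity from part (i) to guarantee that being outside $\operatorname{cl}(\mathcal I_0)$ gives a genuine gap, i.e. $E(P) = E(0)+1 = E_{\mathrm{ess}}(P)$ lies at the bottom of the essential spectrum, so that the limiting value $\lim_{\lambda\uparrow E(P)}\langle \Omega,(H(P)-\lambda)^{-1}\Omega\rangle$ is really a boundary evaluation of the resolvent at the edge of the essential spectrum rather than at an isolated eigenvalue.

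Having the finiteness of the limit, the nonexistence of a ground state follows by a standard argument: if $\Psi_P$ were a normalized ground state at energy $E(P)$, then the spectral measure of $H(P)$ in the state $\Omega$ would have an atom at $E(P)$ of mass $|\langle \Omega, \Psi_P\rangle|^2$, and $\langle \Omega,(H(P)-\lambda)^{-1}\Omega\rangle = \int (\mu - \lambda)^{-1}\,\mathrm d\rho_\Omega(\mu)$ would blow up as $\lambda\uparrow E(P)$ unless $\langle\Omega,\Psi_P\rangle = 0$. The latter overlap is known to be nonzero (positivity/Perron-Frobenius of the ground state against the vacuum in the relevant representation), so finiteness of the limit forces the absence of an eigenvalue at $E(P)$, hence no ground state.

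The main obstacle I anticipate is the renewal-theoretic control of the endpoint $\lambda = E(P)$ rather than a strict-inequality regime $\lambda < E(P)$: one must show that the per-block weight is continuous up to the critical value and that subcriticality is \emph{strict} precisely when $|P|\notin\operatorname{cl}(\mathcal I_0)$, which requires matching the analytic definition of $\mathcal I_0$ (through the spectral gap $E(P)<E(0)+1$) with the probabilistic criterion (the renewal series' radius of convergence). Establishing this correspondence — and in particular identifying the critical weight exactly with the closing of the spectral gap, so that $|P|\notin\operatorname{cl}(\mathcal I_0)$ translates cleanly into a strictly convergent geometric series with finite boundary value — is where the real work lies; the monotone-convergence passage to the limit and the spectral-measure argument for nonexistence are then comparatively routine.
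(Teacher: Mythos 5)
Your overall architecture --- Feynman--Kac representation, a renewal decomposition into irreducible blocks whose generating function is a geometric series with critical point at $\lambda = E(P)$, and the spectral-measure/non-orthogonality argument for the absence of a ground state --- is exactly the paper's. But there is a genuine gap, and it sits precisely at the point you yourself flag as ``where the real work lies.'' The paper's key structural input, which you do not supply, is that after writing $|x|^{-1} = \sqrt{2/\pi}\int_0^\infty e^{-u^2|x|^2/2}\,\mathrm du$ each irreducible block becomes a mixture of centred Gaussian path measures, so that the block weight depends on $P$ \emph{only} through an explicit factor $e^{-P^2\sigma^2(\zeta)/2}$, where $\sigma^2(\zeta)$ is the variance (per coordinate) of the block's endpoint displacement under the tilted Gaussian measure and satisfies $0 \le \sigma^2(\zeta) \le T_1(\zeta)$. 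Without this, the tilting $e^{-\mathrm i P\cdot X_{T_1}}$ leaves you with an oscillatory integral: you have no positivity of the renewal measure (so renewal theory does not even apply), and no monotonicity in $P$ of the per-block weight, which is the single fact from which everything follows. In particular, your argument for (iii) needs that $P\mapsto \mu\big(e^{-P^2\sigma^2/2+\lambda T_1}\big)$ is \emph{strictly} decreasing, so that for $|P|\notin\operatorname{cl}(\mathcal I_0)$ one can compare with some $P_1<|P|$, $P_1\notin\mathcal I_0$, for which the weight at $\lambda=E_{\mathrm{ess}}(0)$ is $\le 1$, and conclude strict subcriticality at the endpoint; invoking ``strict monotonicity from part (i)'' is circular here, since part (i) is itself a consequence of the monotonicity of the weight in $P$, not the other way around.

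Moreover, parts (i) and (ii) are essentially not addressed. Part (i) is used but never proved. Part (ii) is not mentioned at all; in the paper it follows from the fact that the weight is log-convex in the pair of tilting parameters, so that H\"older's inequality applied to the implicit equation $\mu\big(e^{-\lambda\sigma^2/2+h(\lambda)T_1}\big)=1$ defining $h=E\circ\sqrt{\cdot}$ yields strict concavity of $h$; the statements $m_{\mathrm{eff}}\in(1,\infty)$ and $\big[0,\sqrt{2m_{\mathrm{eff}}}\,\big)\subset\mathcal I_0$ then come from $\sigma^2\le T_1$, $\mu(\sigma^2<T_1)>0$, and differentiating the implicit equation twice at $P=0$. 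All of these again rest on the Gaussian-mixture form of the weight, so supplying that representation is the missing idea that would let you complete all three parts.
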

For the polaron with an ultraviolet cut-off and in dimensions 3 and 4, the non-existence of a ground state for $|P|\notin \mathcal I_0$ has been shown in \cite{Mo06}. In a certain limit of strong coupling, the negativity of the correction to the quasi-particle energy has been shown in \cite{MMS22}.
In (iii) we used that if $H(P)$ has a ground state, then it is non-orthogonal $\Omega$: The operator $e^{\mathrm i \pi \mathbf N} e^{-TH(P)} e^{-\mathrm i \pi \mathbf N}$ is for all $T>0$ positivity improving \cite[Theorem 6.3]{Miy10} which in turn implies that if there exists a ground $\psi_P$ state of $H(P)$ then it is unique (up to a phase) and can be chosen such that $e^{\mathrm i \pi \mathbf N}\psi_P$ is strictly positive \cite[Theorem 2.12]{Miy10}. In  \cite[Theorem 6.4]{Miy10} it was shown that there exists a ground state of $H(P)$ for $|P|< \sqrt{2}$. Part (ii) of our Theorem \ref{Theorem: Main result} allows us to improve this to existence of a ground state for $|P|< \sqrt{2m_\text{eff}}$.\\
\\
Before starting with the proof of Theorem \ref{Theorem: Main result}, we give a brief summary of our approach. An application of the Feynman-Kac formula to the semigroup generated by the Hamiltonian yields \cite{DySp20}
\begin{equation*}
	\label{Equation: Feyman-Kac}
	\langle \Omega, e^{-TH(P)} \Omega \rangle = \int_{C([0, \infty), \mathbb R^3)} \mathcal W(\mathrm dX)\, e^{- \mathrm i P \cdot X_T} \exp\bigg( \frac{\alpha}{2} \int_0^T \int_0^T \mathrm ds \mathrm dt \, \frac{e^{-|t-s|}}{|X_{s, t}|} \bigg)
\end{equation*}
for all $P\in \mathbb R^3$ and $T\geq 0$, where $\Omega$ is the Fock vacuum, $\mathcal W$ is the distribution of a three dimensional Brownian motion started in the origin and $X_{s, t} \coloneqq X_t -X_s$ for $X\in C([0, \infty), \R^3)$ and $s, t\geq 0$. After normalizing the expression above by dividing by $\langle \Omega, e^{-TH(0)} \Omega \rangle$, one can study $E$ by looking at the large $T$ asymptotics of Brownian motion perturbed by a pair potential. Herbert Spohn conjectured in \cite{Sp87} convergence of the resulting path measure under diffusive rescaling to Brownian motion and showed that the respective diffusion constant is then the inverse of the effective mass, see also \cite{DySp20}. The validity of this central limit theorem was shown by Mukherjee and Varadhan in \cite{MV19} for sufficiently small $\alpha$ and then, by extending the proof given in \cite{MV19}, for all $\alpha$ in \cite{BP21}. The proof given in \cite{MV19} relies on a representation of the path measure as a mixture of Gaussian measures, where the mixing measure can be expressed in terms of a perturbed birth and death process. An application of renewal theory then yielded the existence of an infinite volume measure and a central limit theorem provided that a certain technical condition holds whose validity was proven for sufficiently small coupling parameters. In \cite{BP21} this approach was continued. Rather than directly verifying the validity of said condition, the point process representation was used in order to derive a renewal equation for $T \mapsto \langle \Omega, e^{-TH(0)} \Omega \rangle$. It was then shown that the condition of Mukherjee and Varadhan is equivalent to the known existence of a ground state of $H(0)$ that is non-orthogonal to $\Omega$. We will use a similar approach and derive renewal equations for $T \mapsto \langle \Omega, e^{-TH(P)} \Omega \rangle$ for any $P$. We arrive at our results by comparing the asymptotic behaviour of the solutions in dependency of $P$. \\
\\
In our units the free electron has mass 1 and physically one would expect that
	$1 < m_{\text{eff}} < \infty$.
The proof of the central limit theorem entails a formula for the diffusion constant that directly implies that this indeed holds. We will give an additional proof that yields (essentially) the same formula for the effective mass
but that does not rely on the validity of a central limit theorem. Numerous efforts have been made to establish central limit theorems for related models (see e.g. \cite{BeSp04}, \cite{Gu06}, \cite{Mu22}, \cite{BP21}) and a generalization of the method presented below may be a viable alternative to study the effective mass with probabilistic methods.
			
\section{Proof of Theorem \ref{Theorem: Main result}}		
We define $\triangle \coloneqq \{(s, t)\in \mathbb [0, \infty)^2:\, s<t\}$ and $\mathcal Y \coloneqq \bigcup_{n=0}^\infty (\triangle \times [0, \infty))^n$, and equip the latter with the disjoint-union $\sigma$-algebra (i.e. the final $\sigma$-algebra with respect to the canonical injections $(\triangle \times [0, \infty))^n \hookrightarrow \mathcal Y$, $n\in \N$). For $\zeta = ((s_i, t_i, u_i))_{1\leq i \leq n}\in \mathcal Y$ let 
\begin{equation*}
	T_1(\zeta) \coloneqq \sup_i t_i, \quad \sigma^2(\zeta) \coloneqq \operatorname{dist}_{L^2}\Big(B_{T_1(\zeta)}, \operatorname{span}\{u_i B_{s_i, t_i} + Z_i: \, 1\leq i \leq n\}\Big)^2
\end{equation*}
where $(B_t)_{t\geq 0}$ is a one dimensional Brownian motion and $(Z_n)_n$ is an iid sequence of $\mathcal N(0, 1)$ distributed random variables that is independent of $(B_t)_{t\geq 0}$. For a measure $\mu$ on $\mathcal Y$ and a measurable function $f:\mathcal Y \to \mathbb R$ we abbreviate  $\mu(f) \coloneqq \int_{\mathcal Y} \mu(\mathrm d\zeta) f(\zeta)$ provided that the integral exists in $\overline{\mathbb R}$. Additionally, we set $f_P(T) \coloneqq \langle \Omega, e^{-TH(P)} \Omega \rangle$ for $P\in \mathbb R^3$, $T\geq 0$.
 
\begin{prop}
	\label{Proposition: Our renewal equations}
	There exists a measure $\mu$ on $\mathcal Y$ such that
	\begin{equation*}
		f_P(T) = \mu\big(e^{-P^2 \sigma^2/2} f_P(T-T_1)\1_{\{T_1 \leq T\}}\big) + e^{-P^2T/2}
	\end{equation*}
	holds for all $P \in \mathbb R^3$ and $T \geq 0$.	
\end{prop}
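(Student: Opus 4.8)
The plan is to expand the Feynman–Kac functional into a gas of ``interaction intervals'' and to read off the renewal equation from the decomposition of a configuration into connected clusters. First I would linearise the Coulomb kernel by the Gaussian (Schwinger) identity $\frac{1}{|x|}=\sqrt{2/\pi}\int_0^\infty e^{-u^2|x|^2/2}\,\mathrm du$ for $x\in\R^3$, so that after using the symmetry $s\leftrightarrow t$ the interaction exponent becomes $\int e^{-u^2|X_{s,t}|^2/2}\,\nu(\mathrm ds\,\mathrm dt\,\mathrm du)$, where $\nu(\mathrm ds\,\mathrm dt\,\mathrm du)=\alpha\sqrt{2/\pi}\,e^{-(t-s)}\1_{\{s<t\}}\,\mathrm ds\,\mathrm dt\,\mathrm du$ is a measure on $\triangle\times[0,\infty)$ and the integration is restricted to $t\le T$. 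Expanding the exponential gives
\[
f_P(T)=\sum_{n\ge0}\frac1{n!}\int_{(\triangle\times[0,\infty))^n}\nu^{\otimes n}(\mathrm d\zeta)\,\1_{\{T_1(\zeta)\le T\}}\,\mathbb E_{\mathcal W}\Big[e^{-\mathrm iP\cdot X_T}\prod_{i=1}^n e^{-u_i^2|X_{s_i,t_i}|^2/2}\Big],
\]
the constraint $t_i\le T$ being exactly $T_1(\zeta)\le T$; the $n=0$ term is $\mathbb E_{\mathcal W}[e^{-\mathrm iP\cdot X_T}]=e^{-P^2T/2}$.

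The key step is the Gaussian computation for the inner expectation. Since $\prod_i e^{-u_i^2|X_{s_i,t_i}|^2/2}$ is an even Gaussian weight, it tilts $\mathcal W$ into another centred Gaussian measure under which $X_T$ stays centred; hence the expectation factorises over the three coordinates and equals a $P$-independent normalisation times $e^{-P^2\sigma^2/2}$, where $\sigma^2$ is the variance of one coordinate of the tilted displacement. I would then identify this variance with the stated distance: writing $e^{-u_i^2B_{s_i,t_i}^2/2}=\mathbb E_{Z_i}[e^{\mathrm i u_iB_{s_i,t_i}Z_i}]$ and carrying out the resulting Gaussian integral exhibits the tilting as conditioning of the Gaussian vector $(B,(Z_i)_i)$ on the values $u_iB_{s_i,t_i}+Z_i$, so that the residual variance of $B_{T_1}$ is precisely $\operatorname{dist}_{L^2}(B_{T_1},\operatorname{span}\{u_iB_{s_i,t_i}+Z_i\})^2=\sigma^2(\zeta)$. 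This equivalence of Gaussian tilting and conditioning is the Woodbury/Schur-complement identity, which one checks directly in the one-interval case.

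To pass from the bare expansion to the renewal equation I would decompose each configuration according to the connected components (clusters) of $\bigcup_i(s_i,t_i)$ and peel off the leftmost cluster. At its right endpoint the Brownian increments split into independent pieces, so both $e^{-\mathrm iP\cdot X_T}$ and the product of Gaussian weights factorise there; translation invariance of $\nu$ and of the Brownian increments then identifies the contribution of everything to the right of the first cluster with $f_P(T-T_1)$, while the combinatorial factor $\tfrac1{n!}\binom nk=\tfrac1{k!(n-k)!}$ distributes correctly over the two parts. Taking $\mu$ to be $\tfrac1{n!}\,\mathbb E_{\mathcal W}[\prod_i e^{-u_i^2|X_{s_i,t_i}|^2/2}]\,\nu^{\otimes n}$ restricted to single-cluster configurations then yields exactly $f_P(T)=\mu(e^{-P^2\sigma^2/2}f_P(T-T_1)\1_{\{T_1\le T\}})+e^{-P^2T/2}$. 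Here it is important that the free gap $[0,\min_i s_i]$ preceding a cluster is automatically absorbed into $\sigma^2(\zeta)$: since $B_{\min_i s_i}$ is orthogonal to every $u_iB_{s_i,t_i}+Z_i$, one gets $\sigma^2(\zeta)=\min_i s_i+(\text{within-cluster variance})$, which is exactly the product of the free-gap factor $e^{-P^2\min_i s_i/2}$ and the cluster's $P$-factor, so that all gaps of a configuration are accounted for once along the recursion.

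The main obstacle I anticipate is not the algebra but the analytic bookkeeping: one must justify expanding the exponential and interchanging the series, the $u$-integrals and the Brownian expectation by a dominated-convergence/Fubini argument, controlling the ultraviolet behaviour hidden in the $u\to\infty$ regime and the integrability of $\nu^{\otimes n}$ against the Gaussian normalisations. Here the bound $\mathbb E_{\mathcal W}[\prod_i e^{-u_i^2|X_{s_i,t_i}|^2/2}]\le1$ together with the exponential weight $e^{-(t-s)}$ in $\nu$ is what makes $\mu$ a well-defined, suitably finite measure. One also has to verify that the cluster decomposition, i.e. the map sending a configuration to its leftmost cluster and its remainder, is measurable and measure-preserving in the required sense, so that the factorisation holds $\nu^{\otimes n}$-almost everywhere.
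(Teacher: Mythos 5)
Your proposal is correct and, at its core, coincides with the paper's proof: you cut at the same place (the right endpoint of the leftmost connected cluster of $\bigcup_i[s_i,t_i]$ is exactly the paper's regeneration time $\tau$, the first time after the first birth at which the associated birth-and-death process has no individual alive), and you use the same key lemma, namely the Schwinger linearisation $|x|^{-1}=\sqrt{2/\pi}\int_0^\infty e^{-u^2|x|^2/2}\,\mathrm du$ together with the identification of the tilted variance with $\operatorname{dist}_{L^2}(B_{T_1},\operatorname{span}\{u_iB_{s_i,t_i}+Z_i\})^2$, which the paper imports from Proposition 3.2 of \cite{BP22}. The only genuine difference is organisational: you perform the cluster decomposition combinatorially on the series, distributing the $1/n!$ by hand and absorbing the free gap $[0,\min_i s_i]$ into $\sigma^2$ via orthogonality of $B_{\min_i s_i}$ to the span (which is a correct and rather clean observation), whereas the paper first rewrites the expansion as an expectation over a Poisson point process with intensity $\alpha e^{-(t-s)}\1_{\{0<s<t\}}\,\mathrm ds\,\mathrm dt$ and then invokes the regeneration of this process at the stopping time $\tau$. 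Your route is more elementary and self-contained; the paper's buys an immediate identification of $\mu$ as $e^{\alpha\tau}\phi(\xi,u)$ times the law of the first excursion, which makes quantities like $\Xi(\tau>T,\,N_T=0)=e^{-\alpha T}$ and the local finiteness of the induced renewal measure easy to read off. One small caveat when you make the bookkeeping rigorous: the bound $\phi(\xi,u)\le 1$ together with the weight $e^{-(t-s)}$ is not by itself enough to make $\mu$ well defined, since you also need integrability in $u$ over $[0,\infty)^n$; this follows because $\int\mathrm du\,(2/\pi)^{n/2}\phi(\xi,u)$ reproduces the Coulomb moment $\mathbb E_{\mathcal W}\big[\prod_i|X_{s_i,t_i}|^{-1}\big]$, whose finiteness (already needed to justify the series expansion at $P=0$) is the actual input.
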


\begin{proof}
	We define
	\begin{equation*}
		F_P(T_1, T_2, \xi) \coloneqq \int \mathcal W(\mathrm d X) \, e^{- \mathrm i P \cdot X_{T_1, T_2}} \prod_{i=1}^{n} |X_{s_i, t_i}|^{-1}
	\end{equation*}
	for $T_1, T_2 \geq 0$ and $\xi = ((s_i,t_i))_{1\leq i \leq n}$ such that the integral is well defined.
	Let $\nu_T(\mathrm ds \mathrm dt) \coloneqq \alpha e^{-|t-s|} \1_{\{0<s<t<T\}} \mathrm ds \mathrm dt$. Expanding the exponential into a series and exchanging the order of integration leads to\footnote{Using that the integral is finite for $P=0$ shows that this is indeed justified.}
		\begin{align}
		\label{Equation: Paths measure by PPP}
		f_P(T) &=  \int \mathcal W(\mathrm d X) \, e^{-\mathrm i P \cdot X_{0, T}} \exp \left( \int \nu_T(\mathrm ds \mathrm dt) |X_{s, t}|^{-1}  \right) \nonumber \\ \nonumber
		&= \sum_{n=0}^\infty \frac{1}{n!} \int \nu_{T}^{\otimes n}(\mathrm d s_1 \mathrm d t_1, \hdots, \mathrm d s_n \mathrm d t_n) \int \mathcal W(\mathrm dX)\, e^{-\mathrm i P \cdot X_{0, T}} \prod_{i=1}^{n}|X_{s_i, t_i}|^{-1} \nonumber \\
		&= e^{c_{T}} \int \Gamma_{T}(\mathrm d\xi) F_P(0, T, \xi)
	\end{align}
	where $\Gamma_T$ is the distribution of a Poisson point process on $\mathbb R^2$ with intensity measure $\nu_T$ and $c_T \coloneqq \nu_T(\mathbb R^2)$. Let $\Gamma$ be the distribution of a Poisson point process on $\mathbb R^2$ with intensity measure $\nu(\mathrm ds \mathrm dt) \coloneqq \alpha e^{-|t-s|} \1_{\{0<s<t\}} \mathrm ds \mathrm dt$. The measure $\Gamma$ can be seen as the distribution of a birth and death process with birth rate $\alpha$ and death rate 1 (started with no individual alive at time 0) by identifying an individual that is born at $s$ and that dies at $t$ with the point $(s, t)$. For $t \geq 0$ and a configuration $\xi = ((s_i, t_i))_{i}$ of individuals let $N_t(\xi) \coloneqq |\{i: \, s_i \leq t < t_i\}|$ be the number of individuals alive at time $t$. By the restriction theorem for Poisson point processes, $\Gamma_{T}$ can be obtained by restricting $\Gamma$ to the process of all individuals that are born before $T$ conditional on the event that no individual is alive at time $T$. One can easily verify that 
	\begin{equation*}
		e^{c_T} = e^{\alpha T} e^{-\nu([0, T] \times (T, \infty))} = e^{\alpha T} \Gamma(N_T = 0).
	\end{equation*}
	Hence, if we denote by $\xi_{t_1, t_2}$ the restriction of $\xi$ to all individuals born in $[t_1, t_2)$, we can rewrite
	\begin{equation*}
		f_P(T) =  e^{\alpha T} \int \Gamma(\mathrm d\xi) F_P(0, T, \xi_{0, T}) \1_{\{N_T(\xi) = 0\}}.
	\end{equation*}
	Let
	\begin{equation*}
		\tau(\xi) \coloneqq \inf\{t\geq \inf_i s_i: \, N_t(\xi) = 0\}
	\end{equation*}
	be the first time after the first birth at which no individual is alive. By independence of Wiener increments
	\begin{equation*}
		F_P(0, T, \xi_{0, T}) = F_P(0, \tau(\xi), \xi_{0, \tau(\xi)}) F_P(\tau(\xi), T, \xi_{\tau(\xi), T}). 
	\end{equation*}
	for all $\xi \in \{\tau \leq T\}$ such that that the left hand side is well defined.
	Let $\Xi$ be the distribution of $\xi \mapsto \xi_{0, \tau(\xi)}$ under $\Gamma$. The process $\Gamma$ regenerates after $\tau$ and by the translation invariance of $F_P$ under a simultaneous time shift in all variables and since $e^{\alpha T} = e^{\alpha \tau}e^{\alpha (T-\tau)}$
	\begin{align*}
		f_P(T) =& \int \Xi(\mathrm d \xi) \1_{\{\tau(\xi) \leq T\}}e^{\alpha \tau(\xi)} F_P(0, \tau(\xi), \xi) f_P(T-\tau(\xi)) \\ 
		&+  e^{\alpha T}\int \Xi(\mathrm d \xi) \1_{\{\tau(\xi) > T, \, N_T(\xi) = 0\}} F_P(0, T, \xi_{0, T}).
	\end{align*}
	The event $\{\tau > T, \, N_T = 0\}$ happens if and only if there is no birth until time $T$. Then $\xi_{0, T}$ is the empty configuration and hence
	\begin{align*}
		F_P(0, T, \xi_{0, T}) = \mathbb E_\mathcal W\big[e^{-\mathrm i P \cdot X_T}\big] = e^{-P^2 T/2}
	\end{align*}
	for $\xi \in \{\tau> T, \, N_T = 0\}$. Under $\Xi$, the time until the first birth is $\operatorname{Exp}(\alpha)$ distributed and hence $	\Xi(\tau> T, \, N_T = 0) = e^{-\alpha T}$.
	Combined, this gives us
	\begin{equation*}
		f_P(T) = e^{-P^2T/2} + \int \Xi(\mathrm d \xi) e^{\alpha \tau(\xi)} \1_{\{\tau(\xi) \leq T\}} F_P(0, \tau(\xi), \xi) f_P(T-\tau(\xi)).
	\end{equation*}
	For $(\xi, u) \in \triangle^n \times [0, \infty)^n$ we define $\mathbb P_{\xi, u}$ by
	\begin{equation*}
		\mathbb P_{\xi, u}(\mathrm dX) \coloneqq \frac{1}{\phi(\xi, u)} e^{-\sum_{i=1}^n u_i^2 |X_{s_i, t_i}|^2/2} \mathcal W(\mathrm dX)
	\end{equation*}
	where $\phi(\xi, u)$ is a normalization constant.
	Then $\mathbb P_{\xi, u}$ is a centred and rotationally symmetric Gaussian measure and
	\begin{equation*}
		\frac{1}{3}\mathbb E_{\mathbb P_{\xi, u}}\big[|X_t|^2\big] = \operatorname{dist}_{L^2}\Big(B_{t}, \operatorname{span}\{u_i B_{s_i, t_i} + Z_i: \, 1\leq i \leq n\}\Big)^2 \eqqcolon \sigma^2_t(\xi, u)
	\end{equation*}
	for all $t\geq 0$, see the proof of Proposition 3.2 in \cite{BP22}. We thus have
	\begin{align*}
		F_P(0, t, \xi) &= \int \mathcal W(\mathrm dX) \int_{[0, \infty)^n}  \mathrm du\, (2/\pi)^{n/2}\, e^{-\mathrm i P \cdot X_t} e^{-\sum_{i=1}^n u_i^2|X_{s_i, t_i}|^2/2} \\
		&= \int_{[0, \infty)^n} \mathrm du\, (2/\pi)^{n/2} \phi(\xi, u) e^{-P^2 \sigma_t^2(\xi, u)/2}.
	\end{align*}
	Hence, the measure we are looking for is given by
	\begin{equation}
		\mu(\mathrm d \xi \mathrm du) \coloneqq \Xi(\mathrm d \xi) \mathrm du \, (2/\pi)^{n(\xi)/2} e^{\alpha \tau(\xi)} \phi(\xi, u)
	\end{equation}
	under the identification of $\triangle^n \times [0, \infty)^n$ with $(\triangle \times [0, \infty))^n$.
\end{proof}

\begin{prop}
	\label{Propposition: Matrix element of Resolvent}
	We have $\mu(e^{-\sigma^2 P^2 /2 + E(P) T_1}) \leq 1$ for all $P\in  \R^3$ and for $\lambda < E(P)$ we have
	\begin{equation*}
		\langle \Omega, (H(P) - \lambda)^{-1} \Omega \rangle = \frac{1}{P^2/2-\lambda} \cdot \frac{1}{1-\mu(e^{-P^2\sigma^2 /2 + \lambda T_1})}.
	\end{equation*}
	If $|P|\in \mathcal I_0$ then $E(P)$ is the unique real number satisfying
	\begin{equation*}
		\mu(e^{-P^2\sigma^2 /2 + E(P) T_1}) = 1.
	\end{equation*}
\end{prop}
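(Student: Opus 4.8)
The plan is to take the Laplace transform in $T$ of the renewal equation from Proposition \ref{Proposition: Our renewal equations}, turn it into an algebraic identity, and then read off all three claims. Abbreviate $m_P(\lambda)\coloneqq \mu(e^{-P^2\sigma^2/2+\lambda T_1})$ and $\hat f_P(\lambda)\coloneqq \int_0^\infty e^{\lambda T}f_P(T)\,\mathrm dT\in[0,\infty]$. Writing $\rho$ for the spectral measure of $H(P)$ associated with $\Omega$, the spectral theorem gives $f_P(T)=\int e^{-Tx}\,\rho(\mathrm dx)$; since $e^{-Tx}>0$ and $\rho$ is a nonzero finite measure supported in $[E(P),\infty)$, we have $f_P>0$, and for $\lambda<E(P)$ Tonelli yields $\hat f_P(\lambda)=\int (x-\lambda)^{-1}\,\rho(\mathrm dx)=\langle\Omega,(H(P)-\lambda)^{-1}\Omega\rangle\in(0,\infty)$. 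I would also record that $\langle\Omega,H(P)\Omega\rangle=P^2/2$ (the linear term and $\mathbf N$ have vanishing vacuum expectation and $P_f\Omega=0$), so that $E(P)\le P^2/2$ and, in particular, $\lambda<E(P)$ forces $\lambda<P^2/2$.

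First I would Laplace-transform the renewal equation. Every factor appearing in it is non-negative, so Tonelli justifies interchanging $\int_0^\infty e^{\lambda T}\,\mathrm dT$ with the $\mu$-integral; the shift $T\mapsto T-T_1$ and $\int_0^\infty e^{(\lambda-P^2/2)T}\,\mathrm dT=(P^2/2-\lambda)^{-1}$ then give, as an identity in $[0,\infty]$ valid for every $\lambda<P^2/2$, the relation $\hat f_P(\lambda)=m_P(\lambda)\,\hat f_P(\lambda)+(P^2/2-\lambda)^{-1}$. For $\lambda<E(P)$ the quantity $\hat f_P(\lambda)$ is finite and positive, so $m_P(\lambda)\hat f_P(\lambda)=\hat f_P(\lambda)-(P^2/2-\lambda)^{-1}$ is finite, forcing $m_P(\lambda)<\infty$; solving the identity gives the stated resolvent formula and, since the subtracted term $\big((P^2/2-\lambda)\hat f_P(\lambda)\big)^{-1}$ is strictly positive, the bound $m_P(\lambda)<1$ for all $\lambda<E(P)$. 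Letting $\lambda\uparrow E(P)$, monotone convergence (the integrand increases pointwise to $e^{-P^2\sigma^2/2+E(P)T_1}$) gives $m_P(E(P))=\lim_{\lambda\uparrow E(P)}m_P(\lambda)\le 1$, which is the first assertion.

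For the characterisation of $E(P)$ I would first note that $m_P$ is strictly increasing where finite: for $\lambda_1<\lambda_2$ the difference is $\mu\big(e^{-P^2\sigma^2/2}(e^{\lambda_2 T_1}-e^{\lambda_1 T_1})\big)>0$, because $T_1>0$ holds $\mu$-almost everywhere (the first birth occurs at a strictly positive time) and $\mu\neq 0$; hence $m_P$ attains the value $1$ at most once, which gives uniqueness as soon as $m_P(E(P))=1$. To upgrade $m_P(E(P))\le 1$ to equality when $|P|\in\mathcal I_0$, I would argue that the resolvent must diverge at the bottom of the spectrum. Because $E(P)<E_\text{ess}(P)=E(0)+1$, the infimum $E(P)$ is an isolated eigenvalue, so a ground state $\psi_P$ exists; by the positivity-improving property recalled in the introduction it is non-orthogonal to $\Omega$, whence $\rho(\{E(P)\})=|\langle\psi_P,\Omega\rangle|^2>0$ and $\hat f_P(\lambda)\ge\rho(\{E(P)\})/(E(P)-\lambda)\to\infty$ as $\lambda\uparrow E(P)$. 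Rearranging the resolvent formula as $1-m_P(\lambda)=\big((P^2/2-\lambda)\hat f_P(\lambda)\big)^{-1}$ and letting $\lambda\uparrow E(P)$, the factor $P^2/2-\lambda\to P^2/2-E(P)$ stays strictly positive (on $\mathcal I_0$ the unique ground state has non-trivial higher particle-number sectors and is therefore not proportional to $\Omega$, so $E(P)<\langle\Omega,H(P)\Omega\rangle=P^2/2$), so the right-hand side tends to $0$ and $m_P(E(P))=1$.

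The main obstacle is this last step. The inequality $m_P(E(P))\le 1$ and the resolvent formula amount to bookkeeping once the transform has been taken, but excluding the possibility $m_P(E(P))<1$ genuinely requires spectral input: the divergence of $\hat f_P(\lambda)=\langle\Omega,(H(P)-\lambda)^{-1}\Omega\rangle$ as $\lambda\uparrow E(P)$, which is available precisely on $\mathcal I_0$ because there $E(P)$ is an eigenvalue whose eigenvector is non-orthogonal to $\Omega$, together with the strict bound $E(P)<P^2/2$ needed to keep the prefactor $(P^2/2-\lambda)^{-1}$ bounded so that the divergence cannot be produced by the inhomogeneous term alone.
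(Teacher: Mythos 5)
Your proof is correct and follows essentially the same route as the paper: Laplace-transform the renewal equation, use finiteness of the resolvent for $\lambda<E(P)$ to force $\mu(e^{-P^2\sigma^2/2+\lambda T_1})<1$ and read off the formula, then combine monotone convergence with the divergence of $\langle\Omega,(H(P)-\lambda)^{-1}\Omega\rangle$ as $\lambda\uparrow E(P)$ (coming from the ground state's non-orthogonality to $\Omega$ on $\mathcal I_0$, with $E(P)<P^2/2$ keeping the prefactor harmless) to get equality. The only cosmetic difference is that you solve the transformed equation algebraically via Tonelli, whereas the paper first invokes the renewal-theoretic series representation $f_P=\sum_{n}\nu_P^{*n}*z_P$ and transforms that; the spectral inputs are identical.
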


\begin{proof}
	For $P\in \R^3$, let $\nu_P$ be the image measure of $e^{-P^2\sigma^2(\zeta)/2}\mu(\mathrm d \zeta)$ under the map $T_1$ and let $z_P(T) \coloneqq e^{-P^2T/2}$ for $T \geq 0$. By Proposition \ref{Proposition: Our renewal equations}, for any $P \geq 0$ the renewal equation
	\begin{equation}
		\label{Equation: Renewal equation}
		f_P = \nu_P*f_P + z_P
	\end{equation}
	holds, where the convolution $\nu_P*f_P$ is defined as
	\begin{equation*}
		(\nu_P*f_P)(T) \coloneqq \int_{[0, T]} \nu_P(\mathrm dt) f_P(T-t)
	\end{equation*}
	for $T\geq 0$. As $f_P$ is continuous and strictly positive, $\inf_{0 \leq t \leq T} f_P(t) >0$ and hence the measure $\nu_P$ is locally finite. Renewal theory implies that the unique locally bounded solution to \eqref{Equation: Renewal equation} is given by
	\begin{equation*}
		f_P = \sum_{n=0}^\infty \nu_P^{*n}*z_P
	\end{equation*}
 	Taking the Laplace transform leads to
	\begin{equation*}
			\langle \Omega, (H(P) - \lambda)^{-1} \Omega \rangle = \mathcal L(f_P)(-\lambda) =  \frac{1}{P^2/2-\lambda} \sum_{n=0}^\infty \mathcal L(\nu_P)^n(-\lambda)
	\end{equation*}
	for\footnote{The inequality $E(P)<P^2/2$ follows from the considerations above and can also be obtained directly from the definition of the Hamiltonians by using $E(0)<0$ and the estimate $E(P) \leq \langle \psi_0, H(P) \psi_0 \rangle$, where $\psi_0$ is the ground state of $H(0)$.} $\lambda < E(P)$. In particular, $\mathcal L(\nu_P)(-\lambda)<1$ for $\lambda < E(P)$ and
		\begin{equation*}
		\langle \Omega, (H(P) - \lambda)^{-1} \Omega \rangle =  \frac{1}{P^2/2-\lambda} \cdot \frac{1}{1-\mu(e^{-P^2 \sigma^2/2 + \lambda T_1})}.
	\end{equation*}
	As mentioned earlier, if there exists a ground state of $H(P)$ then it is unique and non-orthogonal to $\Omega$. In combination with the spectral theorem this implies for $|P|\in \mathcal I_0$ that
	\begin{equation*}
		\lim_{\lambda \uparrow E(P)} \langle \Omega, (H(P) - \lambda)^{-1} \Omega \rangle = \infty
	\end{equation*}
	and hence $\mu(e^{-\sigma^2 P^2 /2 + E(P) T_1}) = 1$ by the monotone convergence theorem.
\end{proof}
\begin{remark}
	Let $P\in \mathbb R^3$ such that $|P|\in \mathcal I_0$ and $\psi_P$ be the unique ground state of $H(P)$. By an application of the spectral theorem 
	\begin{equation*}
		\lim_{T \to \infty} f_P(T)e^{TE(P)} = \lim_{T \to \infty}  \langle \Omega, e^{-T(H(P)-E(P))} \Omega \rangle = |\langle \Omega, \psi_P \rangle|^2.
	\end{equation*}
	On the other hand, the limit $\lim_{T \to \infty} f_P(T)e^{TE(P)}$ can be calculated by using the renewal theorem. This gives us the identity
	\begin{equation}
		\label{Equation: Formular for overlap}
		|\langle \Omega, \psi_P \rangle|^2 = \frac{1}{P^2/2 - E(P)} \frac{1}{\mu(T_1 e^{-P^2\sigma^2/2 + E(P)T_1})}.
	\end{equation}
\end{remark}

\begin{cor}
	$E$ is non-decreasing and strictly increasing on $\mathcal I_0$. For $|P| \notin \operatorname{cl}(\mathcal I_0)$ we have $\lim_{\lambda \uparrow E(P)}\langle \Omega, (H(P) - \lambda)^{-1} \Omega \rangle<\infty$  and $H(P)$ does not have a ground state.
\end{cor}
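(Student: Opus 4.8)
The plan is to reduce everything to the single quantity $g(P, \lambda) \coloneqq \mu(e^{-P^2 \sigma^2/2 + \lambda T_1})$, which by Proposition \ref{Propposition: Matrix element of Resolvent} is exactly the Laplace transform governing the resolvent. The first step is to record how $g$ varies in each argument. Both $T_1 > 0$ and $\sigma^2 > 0$ hold $\mu$-almost everywhere: the former because the configurations charged by $\mu$ always contain at least one individual, and the latter because $B_{T_1}$ cannot lie in the $L^2$-span of the finitely many vectors $u_i B_{s_i, t_i} + Z_i$ (projecting onto the complement of the span of the Brownian increments forces $\sum_i c_i Z_i = 0$, hence all coefficients vanish, which would give the impossible $B_{T_1} = 0$). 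Consequently $\lambda \mapsto g(P, \lambda)$ is strictly increasing and, for fixed $\lambda$, $g(P, \lambda)$ is strictly decreasing in $P^2$, wherever finite. I would then collect the two facts supplied by Proposition \ref{Propposition: Matrix element of Resolvent}: $g(P, \lambda) < 1$ for every $\lambda < E(P)$ and every $P$, and $g(P, E(P)) \le 1$ with equality precisely when $|P| \in \mathcal I_0$.

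For monotonicity I would treat a pair $P_1^2 \le P_2^2$ in two cases. If $|P_2| \notin \mathcal I_0$, then $E(P_2) = E(0)+1$, since $E(P_2) \ge E(0)+1$ by the definition of $\mathcal I_0$ while $E(P_2) \le E_\text{ess}(P_2) = E(0)+1$ because the essential spectrum lies above the bottom of the spectrum; as $E(P_1) \le E(0)+1$ for the same reason, $E(P_1) \le E(P_2)$ is immediate. If instead $|P_2| \in \mathcal I_0$, then $g(P_2, E(P_2)) = 1$, and assuming $E(P_1) > E(P_2)$ for contradiction gives $g(P_1, E(P_2)) < 1$ (as $E(P_2) < E(P_1)$), whereas monotonicity in $P^2$ forces $g(P_1, E(P_2)) \ge g(P_2, E(P_2)) = 1$. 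This proves $E$ is non-decreasing. Strictness on $\mathcal I_0$ follows for $P_1^2 < P_2^2$ both in $\mathcal I_0$: equality $E(P_1) = E(P_2) =: \lambda_0$ would give $g(P_1, \lambda_0) = 1 = g(P_2, \lambda_0)$, contradicting the strict decrease $g(P_2, \lambda_0) < g(P_1, \lambda_0)$. The second case also shows that $|P_1|^2 < |P_2|^2$ with $|P_2| \in \mathcal I_0$ forces $E(P_1) \le E(P_2) < E(0)+1$, so $\mathcal I_0$ is downward closed, hence an interval.

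For the assertion about $|P| \notin \operatorname{cl}(\mathcal I_0)$, write $s \coloneqq \sup \mathcal I_0$; since $\mathcal I_0$ is an interval containing $0$ its closure is $[0, s]$ (the case $s = \infty$ being vacuous), so $|P| > s$. I would fix any $Q \in \R^3$ with $s < |Q| < |P|$. Then $Q \notin \mathcal I_0$, whence $E(Q) = E(0)+1 = E(P)$ as above, and the general bound gives $g(Q, E(Q)) \le 1$. Strict monotonicity of $g$ in $P^2$ together with $|Q| < |P|$ then yields $g(P, E(P)) = g(P, E(0)+1) < g(Q, E(0)+1) = g(Q, E(Q)) \le 1$. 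Because $\lambda \mapsto g(P, \lambda)$ increases to $g(P, E(P))$ as $\lambda \uparrow E(P)$ by monotone convergence, and $E(P) < P^2/2$, the resolvent formula gives $\lim_{\lambda \uparrow E(P)} \langle \Omega, (H(P)-\lambda)^{-1}\Omega\rangle = \frac{1}{P^2/2 - E(P)} \cdot \frac{1}{1 - g(P, E(P))} < \infty$. Finally, were $H(P)$ to admit a ground state $\psi_P$, it would satisfy $\langle \Omega, \psi_P\rangle \ne 0$ by the positivity-improving property recalled in the introduction, and the spectral theorem would force $\langle \Omega, (H(P)-\lambda)^{-1}\Omega\rangle \ge |\langle \Omega, \psi_P\rangle|^2/(E(P)-\lambda) \to \infty$, contradicting finiteness; hence no ground state exists.

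The main obstacle I anticipate is not any single estimate but the bookkeeping that lets the renewal identities—which pin down $E(P)$ as the root of $g(P, \cdot) = 1$ only on $\mathcal I_0$—control $E(P)$ outside $\mathcal I_0$ as well. The device resolving this is the external input $E_\text{ess}(P) = E(0)+1$, which identifies $E$ with the constant $E(0)+1$ off $\mathcal I_0$ and thereby supplies the value of $\lambda$ at which to compare $g(P, \cdot)$ with $g(Q, \cdot)$; choosing the comparison point $Q$ strictly between $s$ and $|P|$ (rather than on the boundary of $\mathcal I_0$) is what keeps the argument free of any continuity assumption on $E$. A secondary point requiring care is the strict positivity $\sigma^2 > 0$ $\mu$-almost everywhere, on which every strict inequality above ultimately rests.
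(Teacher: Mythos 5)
Your proposal is correct and follows essentially the same route as the paper: it reduces everything to monotonicity of $\mu(e^{-P^2\sigma^2/2+\lambda T_1})$ in $P^2$ and $\lambda$, uses $E\equiv E(0)+1$ off $\mathcal I_0$ together with the characterization of $E(P)$ from Proposition \ref{Propposition: Matrix element of Resolvent}, and compares against an intermediate momentum $Q$ outside $\operatorname{cl}(\mathcal I_0)$ exactly as the paper compares $P_1<P_2$ outside $\mathcal I_0$. Your explicit verification that $\sigma^2>0$ holds $\mu$-almost everywhere is a detail the paper leaves implicit, but it does not change the argument.
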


\begin{proof}
	The strict monotonicity on $\mathcal I_0$ follows directly from Proposition \ref{Propposition: Matrix element of Resolvent}. For $P_1, P_2 \notin \mathcal I_0$ with $P_1 < P_2$ we always have
	\begin{equation*}
		\mu(e^{-P_2^2 \sigma^2/2 + E_{\operatorname{ess}}(0)T_1}) < \mu( e^{-P_1^2\sigma^2 /2 + E_{\operatorname{ess}}(0)T_1}) \leq 1
	\end{equation*}
	and hence
	\begin{equation*}
		\mu(e^{-P^2\sigma^2/2 + E_{\operatorname{ess}}(0)T_1})  < 1
	\end{equation*}
	for all $P\in \R^3$ such that $|P| \notin \operatorname{clos}(\mathcal I_0)$. Hence, for those $P$
	\begin{equation*}
		\lim_{\lambda \uparrow E(P)} \langle \Omega, (H(P) - \lambda)^{-1} \Omega \rangle = \frac{1}{P^2/2-E_{\operatorname{ess}}(0)} \cdot \frac{1}{1-\mu(e^{-P^2 \sigma^2 /2 + E_{\operatorname{ess}}(0) T_1})}
	\end{equation*}
	 and $H(P)$ does not have a ground state (since it would need to be non-orthogonal to $\Omega$).
	 If $E$ would be not non-decreasing, then $\mathcal I_0$ would not be an interval i.e. there would exist $P_1 \in [0, \infty) \setminus \mathcal I_0$ and $P_2 \in \mathcal I_0$ such that $P_1 < P_2$. This, however, would imply
	\begin{equation*}
		1 = \mu(e^{-P_2^2\sigma^2/2 + E(P)T_1}) < \mu(e^{-P_1^2\sigma^2/2 + E_\text{ess}(0)T_1}) \leq 1. \qedhere
	\end{equation*}
\end{proof}

\begin{cor}
	\label{Corollary: Is I0 bounded?}
	The interval $\mathcal I_0$ is bounded if and only if there exists a $P\geq 0$ such that
	\begin{equation*}
		\mu(e^{-P^2 \sigma^2/2 + E_\text{ess}(0)T_1}) = \widehat{\mu}(e^{-P^2 \sigma^2/2 + T_1}) < \infty
	\end{equation*}
	where $\widehat{\mu}$ is the probability measure defined by $\widehat{\mu}(\mathrm d \zeta) \coloneqq e^{E(0) T_1(\zeta)} \mu(\mathrm d \zeta)$.	
\end{cor}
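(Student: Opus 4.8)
The plan is to rephrase the claim as a statement about the single function $\Phi(P) \coloneqq \widehat\mu\big(e^{-P^2\sigma^2/2 + T_1}\big)$ and to exploit that it is non-increasing in $P$. First I would record two preliminary facts. Since $0 \in \mathcal I_0$, the last part of Proposition \ref{Propposition: Matrix element of Resolvent} applied at $P = 0$ gives $\mu(e^{E(0)T_1}) = 1$, so $\widehat\mu$ is indeed a probability measure; and because $E_\text{ess}(0) = E(0)+1$ one may factor $e^{E_\text{ess}(0)T_1} = e^{T_1}e^{E(0)T_1}$ to obtain the asserted identity $\mu(e^{-P^2\sigma^2/2 + E_\text{ess}(0)T_1}) = \widehat\mu(e^{-P^2\sigma^2/2+T_1}) = \Phi(P)$. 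Since $\sigma^2 \geq 0$, the integrand defining $\Phi$ is pointwise non-increasing in $P$, and hence so is $\Phi$.

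For the direction ``$\mathcal I_0$ bounded $\Rightarrow$ finiteness'' I would simply pick any $P$ with $|P| \notin \operatorname{cl}(\mathcal I_0)$, which exists once $\sup \mathcal I_0 < \infty$ since $\operatorname{cl}(\mathcal I_0) = [0, \sup \mathcal I_0]$. The preceding corollary (its proof) yields $\Phi(P) = \mu(e^{-P^2\sigma^2/2 + E_\text{ess}(0)T_1}) < 1 < \infty$ for such $P$, which is the desired point.

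The substance is the converse. Assuming $\Phi(P_0) < \infty$ for some $P_0$, for every $P \geq P_0$ the integrand is dominated by the $\widehat\mu$-integrable function $e^{-P_0^2\sigma^2/2 + T_1}$, so I would pass to the limit $P \to \infty$ by dominated convergence. The key input is that $\sigma^2 > 0$ holds $\mu$-almost everywhere: the measure $\Xi$, and hence $\mu$, is carried by configurations with at least one individual, where $T_1 > 0$, and for such a configuration $B_{T_1}$ can lie in $\operatorname{span}\{u_i B_{s_i,t_i} + Z_i : 1\leq i \leq n\}$ only if the coefficients of the independent standard Gaussians $Z_i$ all vanish, which forces $B_{T_1} = 0$ and is impossible. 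Granting $\sigma^2 > 0$ a.e., the integrand tends to $0$ pointwise, so $\Phi(P) \to 0$. On the other hand, for $|P| \in \mathcal I_0$ the identity $\mu(e^{-P^2\sigma^2/2 + E(P)T_1}) = 1$ from Proposition \ref{Propposition: Matrix element of Resolvent}, combined with $E(P) < E_\text{ess}(0)$ and $T_1 > 0$, gives $\Phi(P) > 1$. Hence every sufficiently large $P$ satisfies $\Phi(P) < 1$ and therefore lies outside $\mathcal I_0$; since $\mathcal I_0$ is an interval containing a neighbourhood of the origin by Theorem \ref{Theorem: Main result}(i), it must be bounded.

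I expect the only genuine obstacle to be the almost-everywhere positivity of $\sigma^2$, i.e.\ justifying that the finite-dimensional span never captures $B_{T_1}$; everything else is the monotonicity of $\Phi$ together with a single application of dominated convergence. One should also note that the dominating function is integrable \emph{exactly} under the standing hypothesis $\Phi(P_0) < \infty$, which is precisely what makes the equivalence tight.
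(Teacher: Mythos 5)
Your proof is correct. It is hard to compare routes here because the paper's own proof of this corollary is a single sentence (``this easily follows from the monotone convergence theorem''); what you have written is a complete argument filling in everything that sentence leaves implicit. You correctly isolate the one genuinely non-trivial ingredient, namely the $\mu$-a.e.\ positivity of $\sigma^2$, without which the limit $\Phi(P)\to 0$ as $P\to\infty$ (and hence the implication ``finite for some $P_0$ implies bounded'') would fail; your justification of it is sound: under $\Xi$ every configuration contains at least the first-born individual, so $T_1>0$, and a relation $B_{T_1}=\sum_i c_i(u_iB_{s_i,t_i}+Z_i)$ forces $c_i=0$ for all $i$ by independence of $(Z_i)_i$ from $B$, whence $\|B_{T_1}\|_{L^2}^2=T_1=0$, a contradiction. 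Two minor simplifications are available. For the ``bounded implies finite'' direction, the bound $\mu(e^{-P^2\sigma^2/2+E(P)T_1})\le 1$ from Proposition \ref{Propposition: Matrix element of Resolvent}, together with $E(P)=E_{\mathrm{ess}}(0)$ for $P\notin\mathcal I_0$, already gives $\Phi(P)\le 1<\infty$, so you need neither the strict inequality from the preceding corollary nor the restriction to $|P|\notin\operatorname{cl}(\mathcal I_0)$. And in the final step, once every sufficiently large $P$ lies outside $\mathcal I_0$ you have $\mathcal I_0\subseteq[0,P^*]$ for some finite $P^*$ directly, with no need to invoke that $\mathcal I_0$ is an interval.
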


\begin{proof}
	This easily follows from the monotone convergence theorem. 
\end{proof}

\begin{cor}
	\label{Corollary: Behavior arround origin}
	We have
	\begin{equation}
		\label{Equation: FOrmular for effective mass}
		m_{\text{eff}} = \frac{\widehat \mu(T_1)}{\widehat \mu(\sigma^2)} \in (1, \infty).
	\end{equation}
\end{cor}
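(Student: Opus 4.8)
The plan is to start from the implicit characterisation of $E$ established in Proposition~\ref{Propposition: Matrix element of Resolvent}: for $|P|\in\mathcal I_0$ the value $E(P)$ is the unique real number with $\mu(e^{-P^2\sigma^2/2+E(P)T_1})=1$. Since $E$ is known to be smooth and radial near the origin with $E(P)=E(0)+\tfrac{1}{2m_{\text{eff}}}|P|^2+o(|P|^2)$, I would reparametrise by $\beta:=|P|^2$ and put $\widetilde E(\beta):=E(\sqrt\beta)$, so that $\widetilde E$ is $C^1$ at $0$ with $\widetilde E(0)=E(0)$ and $\widetilde E'(0)=\tfrac{1}{2m_{\text{eff}}}$. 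The identity then reads $h(\beta):=\mu(e^{-\beta\sigma^2/2+\widetilde E(\beta)T_1})=1$ for all $\beta\in[0,\beta_0)$. As $h$ is constant, its right derivative at $0$ vanishes; carrying the derivative into the integral and using $\widehat\mu(\mathrm d\zeta)=e^{E(0)T_1}\mu(\mathrm d\zeta)$ yields $0=\widehat\mu\big(-\tfrac{\sigma^2}{2}+\tfrac{1}{2m_{\text{eff}}}T_1\big)$, which rearranges to $m_{\text{eff}}=\widehat\mu(T_1)/\widehat\mu(\sigma^2)$.

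The delicate point, and the main obstacle, is to justify the interchange of limit and integral. Writing the (vanishing) difference quotient as $\tfrac1\beta\,\mu\big(e^{E(0)T_1}(e^{-\beta\sigma^2/2+(\widetilde E(\beta)-E(0))T_1}-1)\big)$ and combining $|e^g-1|\le|g|e^{|g|}$ with $0\le\widetilde E(\beta)-E(0)\le C\beta$ and $\sigma^2\le T_1$, the integrand is dominated, for small $\beta$, by a constant times $T_1e^{(E(0)+\varepsilon)T_1}$. Dominated convergence thus applies as soon as $\mu(T_1e^{\lambda T_1})<\infty$ for some $\lambda>E(0)$, i.e.\ as soon as the Laplace transform $\Lambda(\lambda):=\mu(e^{\lambda T_1})=\int e^{\lambda t}\,(\mu\circ T_1^{-1})(\mathrm dt)$ of the positive measure $\mu\circ T_1^{-1}$ has abscissa of convergence strictly larger than $E(0)$. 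This is where the spectral gap enters. For $\lambda<E(0)$, Proposition~\ref{Propposition: Matrix element of Resolvent} (at $P=0$) gives $\Lambda(\lambda)=1+\big(\lambda\langle\Omega,(H(0)-\lambda)^{-1}\Omega\rangle\big)^{-1}$. Since $\operatorname{spec}(H(0))=\{E(0)\}\cup[E_\text{ess}(0),\infty)$ with a simple isolated eigenvalue at $E(0)$, the spectral representation reads $\langle\Omega,(H(0)-\lambda)^{-1}\Omega\rangle=c_0/(E(0)-\lambda)+R(\lambda)$ with $c_0=|\langle\Omega,\psi_0\rangle|^2>0$ and $R$ analytic near $E(0)$; hence $\Lambda(\lambda)=1+(E(0)-\lambda)/\big[\lambda(c_0+(E(0)-\lambda)R(\lambda))\big]$, which is analytic in a complex neighbourhood of $E(0)$ because the denominator tends to $E(0)c_0\neq0$. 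As $\Lambda$ is the Laplace transform of a \emph{positive} measure, Pringsheim's theorem forces its abscissa of convergence to be a singular point; since $\Lambda$ continues analytically across $E(0)$, that abscissa must exceed $E(0)$. This gives the required integrability (and, as a byproduct, $\widehat\mu(T_1)=\Lambda'(E(0))<\infty$).

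Finally I would establish $m_{\text{eff}}\in(1,\infty)$ from the pointwise bound $0<\sigma^2<T_1$, valid $\widehat\mu$-a.e. The inequality $\sigma^2\le T_1$ is immediate, since $0$ belongs to the span defining $\sigma^2$ and so $\operatorname{dist}_{L^2}(B_{T_1},\cdot)^2\le\mathbb E[B_{T_1}^2]=T_1$. For strictness, take the index $j$ with $t_j=T_1$ and compute $\mathbb E\big[B_{T_1}(u_jB_{s_j,t_j}+Z_j)\big]=u_j(T_1-s_j)>0$, the $Z_j$-term vanishing by independence; as $u_j>0$ Lebesgue-a.e.\ in the $u$-variables, hence $\widehat\mu$-a.e., $B_{T_1}$ is not orthogonal to the span and therefore $\sigma^2<T_1$. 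For positivity, $\sigma^2=0$ would mean $B_{T_1}=\sum_i c_i(u_iB_{s_i,t_i}+Z_i)$ in $L^2$; projecting onto the subspace generated by the independent variables $Z_i$ forces all $c_i=0$, whence $B_{T_1}=0$, contradicting $T_1>0$. Thus $0<\widehat\mu(\sigma^2)<\widehat\mu(T_1)<\infty$, and $m_{\text{eff}}=\widehat\mu(T_1)/\widehat\mu(\sigma^2)\in(1,\infty)$.
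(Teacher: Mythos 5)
Your proof is correct and follows essentially the same route as the paper: differentiate the implicit relation $\mu(e^{-P^2\sigma^2/2+E(P)T_1})=1$ at the origin, justified by the fact that the Laplace transform $\lambda\mapsto\mu(e^{\lambda T_1})$ continues analytically past $E(0)$ because $E(0)$ is an isolated eigenvalue (the paper phrases this as a removable singularity of $\langle\Omega,(H(0)-\lambda)^{-1}\Omega\rangle^{-1}$; your Pringsheim/Landau step makes the same point explicit). Your $\widehat\mu$-a.e.\ bounds $0<\sigma^2<T_1$ are a slightly stronger version of the paper's $\sigma^2\le T_1$ with $\mu(\sigma^2<T_1)>0$, and usefully also settle $\widehat\mu(\sigma^2)>0$, i.e.\ finiteness of the quotient.
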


\begin{proof}
	Let $P\in \mathcal I_0$ and $\lambda< E(P)$. Then
	\begin{equation*}
		\mu(e^{-P^2\sigma^2/2 + \lambda T_1}) = 1 - \frac{1}{P^2/2 - \lambda}\cdot \frac{1}{\langle \Omega, (H(P) - \lambda)^{-1} \Omega \rangle}.
	\end{equation*}
	The function $\lambda \mapsto \langle \Omega, (H(P) - \lambda)^{-1} \Omega \rangle^{-1}$ has a removable singularity in $E(P)$ since $E(P)$ is for $P \in \mathcal I_0$ an isolated eigenvalue. This implies that there exists an $\tilde \varepsilon>0$ such that $\mu(e^{-P^2\sigma^2/2 + (E(P) + \tilde \varepsilon) T_1})<\infty$. Since $\sigma^2 \leq T_1$ there thus exist  $\varepsilon, \delta>0$ such that
	\begin{equation*}
	\label{Equation: Differentiation under the integral}
		\mu(e^{-(P-\delta)^2\sigma^2/2 + (E(P) + \varepsilon)T_1}) < \infty.
	\end{equation*}
	 Hence, we may differentiate under the integral. Differentiating
	\begin{equation*}
		1 = \mu(e^{- P^2\sigma^2/2 + E(P) T_1})
	\end{equation*}
	twice with respect to $P$ and evaluating at $P=0$ yields the equality in \eqref{Equation: FOrmular for effective mass}. Notice that both integrals are finite by the previous considerations (or by \eqref{Equation: Formular for overlap} for that matter). Since $\sigma^2 \leq T_1$ and $\mu(\sigma^2 < T_1)>0$ the quotient is strictly larger than 1.	 
\end{proof}

\begin{cor}
	 $P \mapsto E(\sqrt{P})$ is strictly concave on $\mathcal I_0$. In particular
	\begin{equation*}
		E(P) - E(0) < \frac{1}{2m_{\text{eff}}} P^2
	\end{equation*}
	for all $P>0$, i.e. the correction to the quasi-particle energy is negative and  $\big[0, \sqrt{2 m_\text{eff}}\, \big) \subset \mathcal I_0$.
\end{cor}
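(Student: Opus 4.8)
The plan is to work with the implicit characterization of $E$ supplied by Proposition~\ref{Propposition: Matrix element of Resolvent}. Writing $q \coloneqq P^2$ and $\mathcal E(q) \coloneqq E(\sqrt q)$, strict concavity of $P\mapsto E(\sqrt P)$ on $\mathcal I_0$ is exactly strict concavity of $q\mapsto \mathcal E(q)$ (the change of variable $q=P^2$ being linear), and for $|P|\in\mathcal I_0$ this function is pinned down by $\mu(e^{-q\sigma^2/2 + \mathcal E(q) T_1}) = 1$. I would first note that differentiation in $q$ under the integral is licensed: the argument already run in the proof of Corollary~\ref{Corollary: Behavior arround origin} produces, for each interior $q$, some $\varepsilon,\delta>0$ with $\mu(e^{-(q-\delta)\sigma^2/2 + (\mathcal E(q)+\varepsilon)T_1})<\infty$, and since $0\le\sigma^2\le T_1$ this dominates every moment appearing below. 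I then introduce the tilted probability measure $\langle F\rangle_q \coloneqq \mu(F\,e^{-q\sigma^2/2 + \mathcal E(q)T_1})$, so that $\langle 1\rangle_q=1$ and $\frac{d}{dq}\langle F\rangle_q = \langle F(-\tfrac12\sigma^2 + \mathcal E'(q)T_1)\rangle_q$.

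Differentiating $\langle 1\rangle_q=1$ once gives $\mathcal E'(q) = \langle\sigma^2\rangle_q/(2\langle T_1\rangle_q)$; at $q=0$ the tilt is precisely $\widehat{\mu}$, so $\mathcal E'(0) = \widehat{\mu}(\sigma^2)/(2\widehat{\mu}(T_1)) = 1/(2m_{\text{eff}})$ by \eqref{Equation: FOrmular for effective mass}. Differentiating a second time and collecting terms, the right-hand side assembles into a perfect square, yielding
\begin{equation*}
	\mathcal E''(q) = -\frac{\langle(\sigma^2 - 2\mathcal E'(q)T_1)^2\rangle_q}{4\langle T_1\rangle_q} \le 0.
\end{equation*}
This is the crux; the algebra is routine once one observes that the first-order relation forces $\sigma^2 - 2\mathcal E'(q)T_1$ to be centred, $\langle \sigma^2 - 2\mathcal E'(q)T_1\rangle_q=0$, so the numerator is exactly its $\langle\cdot\rangle_q$-variance.

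The step I expect to be the main obstacle is upgrading this to \emph{strict} concavity, i.e. showing the variance is strictly positive, equivalently that $\sigma^2$ is not $\mu$-a.s. a deterministic multiple of $T_1$. Note that $\mu(\sigma^2<T_1)>0$ alone does not suffice, as it does not preclude proportionality. The plan is to restrict to the stratum $n=1$, which carries positive $\mu$-mass: there $T_1=t_1$ is fixed by $(s_1,t_1)$, whereas $\sigma^2 = \operatorname{dist}_{L^2}(B_{t_1}, \operatorname{span}\{u_1 B_{s_1,t_1}+Z_1\})^2$ genuinely depends on $u_1$, equalling $t_1$ at $u_1=0$ (since $Z_1$ is independent of $B$) and strictly decreasing as $u_1$ grows; as $u_1$ runs over $[0,\infty)$ against the strictly positive density $(2/\pi)^{1/2}\phi$, the ratio $\sigma^2/T_1$ is non-constant on a set of positive measure, so $\sigma^2-2\mathcal E'(q)T_1$ cannot vanish a.s. Hence $\mathcal E''<0$ on the interior of $\mathcal I_0$ and $\mathcal E$ is strictly concave.

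Finally I would extract the two consequences. Strict concavity with $\mathcal E(0)=E(0)$ and $\mathcal E'(0)=1/(2m_{\text{eff}})$ gives $\mathcal E(q) < E(0) + q/(2m_{\text{eff}})$ for every $q>0$ with $\sqrt q\in\mathcal I_0$, which is the asserted negativity of the correction on $\mathcal I_0$; if $\mathcal I_0=[0,\infty)$ this already covers all $P>0$. Otherwise $\mathcal I_0=[0,P_*)$ with $P_*<\infty$ and $E(P_*)=E(0)+1$ by continuity of $E$. Fixing any $q_0\in(0,P_*^2)$ and combining the concavity bound $\mathcal E(q)\le \mathcal E(q_0)+\mathcal E'(q_0)(q-q_0)$ (with $\mathcal E'(q_0)<\mathcal E'(0)$ and $\mathcal E(q_0)<E(0)+\mathcal E'(0)q_0$, both by strict concavity) with $\mathcal E(q)\to E(0)+1$ as $q\uparrow P_*^2$ yields $E(0)+1 < E(0)+P_*^2/(2m_{\text{eff}})$, hence $P_* > \sqrt{2m_{\text{eff}}}$. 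This gives $[0,\sqrt{2m_{\text{eff}}})\subset\mathcal I_0$, and for the remaining $P\ge P_*$ one has $E(P)-E(0)=1 < P_*^2/(2m_{\text{eff}}) \le P^2/(2m_{\text{eff}})$, completing the strict inequality for all $P>0$.
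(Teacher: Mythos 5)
Your proof is correct, but it takes a genuinely different route from the paper's. The paper obtains strict concavity of $h=E\circ\sqrt{\cdot}$ in one stroke from H\"older's inequality: for $\lambda_1\neq\lambda_2$ in $\tilde{\mathcal I_0}$ and $\beta\in(0,1)$,
\begin{equation*}
	\mu\big(e^{-(\beta\lambda_1+(1-\beta)\lambda_2)\sigma^2/2+(\beta h(\lambda_1)+(1-\beta)h(\lambda_2))T_1}\big)<\mu\big(e^{-\lambda_1\sigma^2/2+h(\lambda_1)T_1}\big)^{\beta}\mu\big(e^{-\lambda_2\sigma^2/2+h(\lambda_2)T_1}\big)^{1-\beta}=1,
\end{equation*}
i.e.\ the level-one set of the jointly log-convex function $(\lambda,h)\mapsto\mu(e^{-\lambda\sigma^2/2+hT_1})$ is the graph of a strictly concave function; no differentiation in $\lambda$ is needed for the concavity itself, and only $h'(0)=1/(2m_{\text{eff}})$ from Corollary~\ref{Corollary: Behavior arround origin} is reused at the end. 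You instead differentiate the implicit equation twice and exhibit $\mathcal E''(q)$ as minus a variance over $4\langle T_1\rangle_q$, which costs you the justification of differentiation under the integral (correctly borrowed from the proof of Corollary~\ref{Corollary: Behavior arround origin}) and a separate non-degeneracy argument for strictness, but buys an explicit formula for $E''$ at every momentum in $\mathcal I_0$, i.e.\ a momentum-dependent generalization of \eqref{Equation: FOrmular for effective mass}. Notably, your $n=1$ stratum computation (with $T_1=t_1$ fixed and $\sigma^2=t_1-u_1^2(t_1-s_1)^2/(u_1^2(t_1-s_1)+1)$ genuinely varying in $u_1$) shows that $\sigma^2$ is not $\mu$-a.s.\ any function of $T_1$; this is precisely what rules out the equality case of H\"older on which the paper's strict inequality silently relies, so your argument supplies a detail the paper omits. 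Likewise your endpoint analysis --- $P_*>\sqrt{2m_{\text{eff}}}$ via the tangent line at an interior point, then $E(P)-E(0)=1<P^2/(2m_{\text{eff}})$ for $P\geq P_*$ --- fills in the ``for all $P>0$'' claim, which the paper's proof establishes only for $P\in\mathcal I_0\setminus\{0\}$ and otherwise leaves implicit.
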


\begin{proof}
	For $\lambda \in \tilde{\mathcal I_0} \coloneqq \{P^2:\, P \in \mathcal I_0\}$ let $h(\lambda)$ be the unique solution to
	\begin{equation*}
		\mu(e^{-\lambda \sigma^2/2 + h(\lambda) T_1}) = 1,
	\end{equation*}
	i.e. $h = E \circ \sqrt{\cdot}$.
	Then, for $\lambda_1, \lambda_2 \in \tilde{\mathcal I_0}$ with $\lambda_1 \neq \lambda_2$ and $\beta \in (0, 1)$ we get with Hölders inequality with dual exponents $1/\beta$ and $1/(1-\beta)$
	\begin{align*}
	 &\mu(e^{-(\beta \lambda_1 + (1-\beta)\lambda_2) \sigma^2/2 + (\beta h(\lambda_1) + (1-\beta)h(\lambda_2)) T_1}) \\
	 &< \mu(e^{-\lambda_1 \sigma^2/2 + h(\lambda_1) T_1})^{\beta} \mu(e^{- \lambda_2 \sigma^2/2 + h(\lambda_2) T_1})^{1-\beta} = 1
	\end{align*} 
	which means
	$h(\beta \lambda_1  + (1-\beta) \lambda_2) > \beta h(\lambda_1) + (1-\beta)h(\lambda_2)$. Hence, $h$ is strictly concave on $\mathcal I_0$, which implies for all $P \in \mathcal I_0 \setminus \{0\}$
	\begin{equation*}
		E(P) - E(0) = h(P^2) - h(0)  < h'(0) P^2 = \frac{1}{2}E''(0)P^2. \qedhere
	\end{equation*}
\end{proof}

{\bf Acknowledgment:} The author would like to thank David Mitrouskas and Krzysztof Myśliwy for making him aware of some open problems concerning the energy-momentum relation. Additionally, he would like to thank Antti Knowles and Volker Betz for helpful comments on an earlier version of the paper. The author was partially supported by the Swiss
National Science Foundation grant 200020-200400.

\end{document}